\def\BibTeX{{\rm B\kern-.05em{\sc i\kern-.025em b}\kern-.08em T\kern-.1667em\lower.7ex\hbox{E}\kern-.125emX}}
\newtheorem{theorem}{Theorem}
\newtheorem{lemma}{Lemma}
\begin{document}

\title{Physical geometry of channel degradation}

\author{Steve Huntsman}
\affil{sch213@nyu.edu\vspace*{-1cm}}

\date{\today}

\IEEEoverridecommandlockouts
\IEEEpubid{\makebox[\columnwidth]{\copyright2019 BAE Systems, Inc. \hfill} \hspace{\columnsep}\makebox[\columnwidth]{ }}

\maketitle

\thispagestyle{empty}

\begin{abstract}
We outline a geometrical correspondence between capacity and effective free energy minima of discrete memoryless channels. This correspondence informs the behavior of a timescale that is important in effective statistical physics.
\end{abstract}


%

\section{\label{sec:tcc}Thermodynamics and channel capacity}

It seems inevitable that the second law of thermodynamics and the propensity for noise to degrade communications channels should be related. In particular, the principle of minimum free energy is a natural candidate for providing a physical interpretation of channel degradation. Specifically, since free energy measures the energy available to perform work, and channel capacity measures the information available for transmission \cite{Ash}, it is natural to anticipate an analogy with something like a principle of minimum channel capacity. 

However, while attempts to link channel capacity and something like a manifestly physical (versus, e.g., a variational/Bethe) free energy have been made \cite{Merhav,Ford3,KarnaniPA,ChiangBoyd}, none of these attempts are fully satisfactory. In \S \ref{sec:mife}, we identify shortcomings along these lines in \cite{Merhav,Ford3}. Meanwhile, \cite{KarnaniPA} avoids detailed discussion of channel capacity altogether, and in \cite{ChiangBoyd}, the ``temperature'' is fixed to unity, so ``energy'' levels are really just conditional entropies. 

More generally, entropy is usually the primary material used for building bridges between information theory and statistical physics, and physically ubiquitous terms of the form [energy]/[temperature] are almost invariably treated as inseparable except perhaps formally (see, e.g., \cite{PelegESK}). This tactic cloaks essentially information-theoretical and mathematical observations involving probabilities in the guise of physics (much as with, e.g., the transient fluctuation theorem \cite{JQQ}).

In contrast, we aim to place temperature and energy on conceptual ground similar to that which entropy has long occupied, albeit using the additional datum of a characteristic timescale. This more expressly physical perspective connects information theory and statistical physics via the geometry of free energy and channel capacity landscapes.

Our basic result is that for the a discrete memoryless channel, the effective free energy is correlated with the channel capacity in a very particular way, so that minima of effective free energy occur near minima of channel capacity. By way of example, we show that the same is not true of entropy. Our results are both consistent with and complementary to \cite{Reiss,ReissHuang,QianR} and more generally provide support for purely physical aspects of our proposed framework based on an effective temperature \cite{Huntsman3}.


\section{\label{sec:efft}Effective statistical physics of finite stationary systems}

We recall the form of an effective temperature $\beta^{-1}$ for finite stationary systems, following \cite{Huntsman3} and omitting details. Consider a statistically stationary system with $n < \infty$ states and characterized by a nondegenerate probability distribution $p := (p_1,\dots,p_n) > 0$ along with a timescale $t_\infty$. 

\begin{theorem} \cite{Huntsman3} Up to an overall constant $\hbar$ with units of action ($ = [\text{time}] \cdot [\text{energy}]$), there is a unique bijection $t := t_\infty p \leftrightarrow (E, \beta^{-1})$ compatible with the Gibbs relation $Z^{-1}e^{-\beta E_k} = p_k$ and dimensional considerations. This bijection is determined by
\begin{equation}
\label{eq:temperature}
\beta(t) = t_{\infty} \lVert p \rVert_2 \cdot \sqrt{\lVert \gamma \rVert_2^2 +1},
\end{equation}
where $\gamma_k := \beta E_k = \frac{1}{n}\sum_{j=1}^{n} \log p_j - \log p_k$.
\end{theorem}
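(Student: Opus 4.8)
The plan is to first extract everything that the Gibbs relation alone forces, and then to check that the single surviving degree of freedom is pinned down by the timescale together with a dimensional/geometric normalization. Taking logarithms in $Z^{-1}e^{-\beta E_k}=p_k$ gives $\beta E_k=-\log Z-\log p_k$; summing over $k$ under the symmetric gauge $\sum_k E_k=0$ (the natural normalization, since $p$ is unchanged by $E_k\mapsto E_k+c$, and equivalent to $\sum_k\gamma_k=0$) forces $\log Z=-\tfrac1n\sum_j\log p_j$, hence $\gamma_k:=\beta E_k=\tfrac1n\sum_{j=1}^n\log p_j-\log p_k$. So $\gamma$---the energies in units of $\beta^{-1}$---is already determined by $p$; this is the standard inversion of the Boltzmann--Gibbs correspondence, whose fiber over a fixed $p$ is the two-parameter family generated by $E_k\mapsto E_k+c$ and $(\beta,E)\mapsto(\lambda\beta,E/\lambda)$, the first parameter of which has now been spent by the gauge. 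The entire residual ambiguity in $(E,\beta^{-1})$ is thus the choice of one scalar $\beta>0$, after which $E_k=\gamma_k/\beta$.

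It remains to fix $\beta$ from $t=t_\infty p$. Since $\beta\hbar$ has units of time, dimensional analysis already forces it to be a function of $t$ that is positively homogeneous of degree one in the $t_k$ (equivalently, linear in $t_\infty$ at fixed $p$), with a dimensionless coefficient depending only on the distribution $p$, hence only on $\gamma$. The normalization that selects the particular such function---and which I expect to reproduce the framework's ``dimensional considerations''---is that the product of the Euclidean lengths of the occupation-time vector $t$ and of the augmented energy vector $(E_1,\dots,E_n,\beta^{-1})$, in which the thermal energy $\beta^{-1}$ is adjoined as an $(n{+}1)$-st component (the temperature being the extra effective degree of freedom that the timescale introduces, and the source of the ``$+1$''), equals one quantum of action: $\lVert t\rVert_2\cdot\lVert(E_1,\dots,E_n,\beta^{-1})\rVert_2=\hbar$. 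Since $E_k=\gamma_k/\beta$ gives $\lVert(E_1,\dots,E_n,\beta^{-1})\rVert_2=\beta^{-1}\sqrt{\lVert\gamma\rVert_2^2+1}$ and $\lVert t\rVert_2=t_\infty\lVert p\rVert_2$, solving $t_\infty\lVert p\rVert_2\,\beta^{-1}\sqrt{\lVert\gamma\rVert_2^2+1}=\hbar$ for $\beta$ returns exactly \eqref{eq:temperature} with $\hbar$ suppressed. Granting the Gibbs relation and the gauge, this identity is in fact equivalent to \eqref{eq:temperature}, so the analytic content of the theorem is precisely to produce it.

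Finally I would check that the resulting map is a bijection onto $\{(E,\beta^{-1}):\sum_k E_k=0,\ \beta^{-1}>0\}$. In the forward direction $t\mapsto(t_\infty,p)=(\sum_k t_k,\;t/\sum_k t_k)\mapsto\gamma\mapsto\beta\mapsto(E,\beta^{-1})$ is single-valued at each step, and the parameter counts match: $n$ on the left, $(n-1)+1$ on the right. For the inverse, from $(E,\beta^{-1})$ recover $\gamma=\beta E$, then $p_k=e^{-\gamma_k}/\sum_j e^{-\gamma_j}$, then $t_\infty=\hbar\beta/(\lVert p\rVert_2\sqrt{\lVert\gamma\rVert_2^2+1})>0$ (positivity automatic, since $\lVert p\rVert_2>0$ and $\sqrt{\lVert\gamma\rVert_2^2+1}\ge 1$), then $t=t_\infty p$; a short computation using $\sum_k\gamma_k=0$ shows the two maps are mutually inverse. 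I expect the genuine obstacle to be the middle step: the rest is the standard Boltzmann--Gibbs inversion plus bookkeeping, whereas arguing that ``compatibility with dimensional considerations'' admits no freedom beyond the Euclidean action-quantization identity---in particular singling out the $\ell^2$ norm against $\ell^1$, $\ell^\infty$, or any other homogeneous permutation-invariant combination, each of which gives a different yet equally dimensionally consistent $\beta$---is the non-routine content, and it is exactly here that the characteristic timescale $t_\infty$ does its essential work.
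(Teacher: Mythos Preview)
Your derivation of $\gamma$ from the Gibbs relation under the gauge $\sum_k E_k=0$ is correct, and the equivalence $\|t\|_2\cdot\|(E,\beta^{-1})\|_2=\hbar\Leftrightarrow\eqref{eq:temperature}$ is exactly right. But, as you yourself flag in your final paragraph, you have not proved uniqueness: you \emph{postulate} the action-quantization identity and then solve it. Any other positively-homogeneous, permutation-symmetric pairing (e.g.\ $\|t\|_1\cdot\|(E,\beta^{-1})\|_1=\hbar$) would be equally ``dimensionally consistent'' by your account, so the theorem's content---that the bijection is \emph{unique}---is not established.

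The paper's outline supplies the missing geometric idea rather than an axiom. Since $p$ is constant along rays through the origin in both $t$-coordinates and $(E,\beta^{-1})$-coordinates (scaling $t$ changes only $t_\infty$; scaling $(E,\beta^{-1})$ changes only $\beta$), the bijection carries rays to rays. Together with the dimensional constraint $\beta=\hbar^{-1}t_\infty\Psi(p)$, this is used to show that $\|t\|_2=\|t'\|_2\Leftrightarrow\|(E,\beta^{-1})\|_2=\|(E',\beta'^{-1})\|_2$: Euclidean spheres correspond to Euclidean spheres. The overall scale is then fixed by evaluating at the uniform point $u:=\|t\|_2\,1/\sqrt{n}$ on the same $t$-sphere, where $\gamma=0$ forces $E(u)=0$, hence $\|(E(u),\beta(u)^{-1})\|_2=\beta(u)^{-1}$, and scaling gives $\beta(u)=\hbar^{-1}\|u\|_2=\hbar^{-1}t_\infty\|p\|_2$. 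Writing $\beta=\|\beta\cdot(E,\beta^{-1})\|_2/\|(E,\beta^{-1})\|_2=\sqrt{\|\gamma\|_2^2+1}\cdot\beta(u)$ then yields the formula. So the selection of $\ell^2$ is argued from the ray structure plus a single anchor, not assumed; that is precisely the step your proposal identifies as the obstacle and leaves open.
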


\begin{proof}[Proof (outline: for details, see \cite{Huntsman3})] Now $\|\beta \cdot (E,\beta^{-1}) \|_2^2 = \|\gamma\|_2^2 +1$, and since $\beta = \|\beta \cdot (E,\beta^{-1}) \|_2/\|(E,\beta^{-1})\|_2$, we only need to compute the denominator on the right to close the equations. Dimensional analysis (say, by a canonical transformation of a notional Hamiltonian that rescales time) turns out to require a relation of the form $\beta = \hbar^{-1} t_\infty \Psi(p)$. It can also be shown that $\|t\|_2 = \|t'\|_2 \Leftrightarrow \|(E,\beta^{-1})\|_2 = \|(E',\beta'^{-1})\|_2$, basically because $p$ is constant on rays through the origin in either $t$-coordinates or $(E,\beta^{-1})$-coordinates. From this, it follows that $u:= \|t\|_2 1 /\sqrt{n}$ corresponds to $E(u) = 0$, where here $1$ indicates the vector with all unit components. Now $$\beta^2 = \frac{\|\beta \cdot (E,\beta^{-1}) \|_2^2}{\|(E,\beta^{-1})\|_2^2} = \frac{\|\gamma\|_2^2 +1}{\beta(u)^{-2}}$$ so that $\beta = \beta(u) \cdot \sqrt{\lVert \gamma \rVert_2^2 +1}$. But by dimensional analysis and scaling behavior, $\beta(u) = \hbar^{-1} \|u\|_2 = \hbar^{-1} \|t\|_2 = \hbar^{-1} t_\infty \|p\|$. The result follows.
\end{proof}

\begin{figure}[htbp]
\centering
\includegraphics[trim = 69mm 117mm 70mm 124mm, clip, width=.8\columnwidth,keepaspectratio]{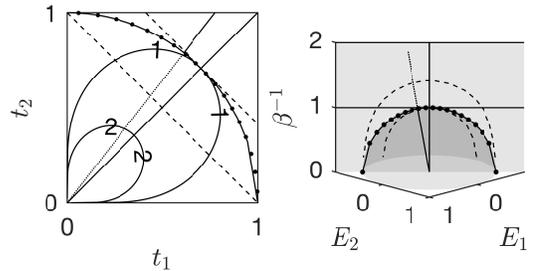}
\caption{ \label{fig:t2H} Geometry of the bijection $(p, t_\infty) \leftrightarrow (E, \beta^{-1})$ for $n = 2$ states. Level curves of $\beta^{-1} = 1,2$ (solid contours) and of $t_\infty = 1, \sqrt{2}$ (dashed contours) are shown in both coordinate systems. The bijection is also shown explicitly for circular arcs and rays.}
\end{figure} 

Applications of the theorem rest on identifying $t_\infty$. Though for a Markov process it can be shown (on the basis of intensivity) that $t_\infty$ cannot \emph{exactly} be the $L^2$ mixing time $t_{mix}$ described in \S \ref{sec:mixingTime}, it must still be broadly similar, and in practice we usually regard $t_{mix}$ as a proxy or avatar of $t_\infty$.
\footnote{In computations, we identify $t_{mix}$ and $t_\infty$ without further comment. For background on and algorithms for computing $t_{mix}$, see, \S \ref{sec:mixingTime} and \cite{LPW}.}

\section{\label{sec:dmcs}Discrete memoryless channels}

A discrete memoryless channel (DMC) is specified by a matrix
\footnote{
We only consider square channel matrices, but this is not that restrictive \cite{Cheng,Takano}. 
}
of conditional probabilities $W_{jk} = \mathbb{P}(y_k|x_j)$, where $x_j$ and $y_k$ respectively denote input and output symbols corresponding to realizations of random variables $X$ and $Y$. That is, $W_{jk}$ is the probability that if Alice transmits $x_j$, then Bob receives $y_k$. An input distribution $p_j = \mathbb{P}(x_j)$ yields joint probabilities $V_{jk} := \mathbb{P}(x_j,y_k) = p_j W_{jk}$. The channel coding theorem states that reliable communication is possible over a DMC iff the ratio of informative to transmitted bits is less than the channel capacity $C := \sup_p I(X;Y)$, where $I(X;Y)$ is the mutual information. 

The evident similarity of a DMC to a discrete-time Markov process should inform any attempt to treat the information theory of DMCs on the same footing as statistical physics. In particular, effective thermodynamical quantities for a DMC should bear some similarity to those for a Markov chain. The simplest approach to defining effective thermodynamical quantities that respects this guideline is to take the capacity-achieving (versus the invariant) distribution for $p$ and (some quantity very much like) the corresponding chain's $L^2$ mixing time $t_{mix} \equiv t_{mix}(W)$ for $t_\infty$. As we shall see below, this recipe is successful at providing a thermodynamical interpretation of channel degradation.

\subsection{\label{sec:mife}Helmholtz free energy and channel capacity }

\subsubsection{\label{sec:srefed}Similarity of relative entropy and a free energy difference}

Formally writing $p_j = \exp(-\beta E_j - \log Z)$ and writing $H$ and $F = -\beta^{-1} \log Z$ for entropy and free energy, respectively, it is easy to show that the relative entropy satisfies 
$D ( p \big \| \tilde p ) = -H(p)+ H(\tilde p) - \tilde \beta \sum_j (\tilde p_j - p_j) \tilde E_j$. In particular, if $\tilde \beta \sum_j p_j \tilde E_j \approx \beta \sum_j p_j E_j$,
then 
$D ( p \big \| \tilde p ) \approx \beta_* (F - \tilde F)$,
where $\beta \approx \beta_* \approx \tilde \beta$.
\footnote{NB. In \cite{Merhav}, these approximations are essentially treated as equalities.}
That is, the relative entropy is similar to a (Helmholtz) free energy difference, particularly when temperature factors are ignored. 

This similarity is exploited in many settings. For instance, $F$ and $\tilde F$ are frequently framed as the Helmholtz and variational/Bethe free energies, respectively: the latter is minimized as a proxy objective function in mean field theory and Bayesian estimation via the well-known EM algorithm \cite{MacKay}. 
The Blahut-Arimoto algorithm \cite{Blahut,Arimoto} for computing the capacity of a DMC fits into this context \cite{CsiszarTusnady}. It is therefore natural to expect that minimizing $\beta F = - \log Z$ is similar---though not equivalent---to minimizing $C$. 

As we shall see below, decomposing $\beta F$ into its constituent terms in a physically reasonable and meaningful way highlights the correspondence between $C$ and $F$. It also reconciles this correspondence with one between $C$ and a distinct notion of temperature for discrete \emph{noiseless} channels in \cite{Reiss}.
\footnote{
The essential observation here is that in the limit of large block length, the combination of a DMC and a capacity-approaching block code (e.g., a polar code \cite{SRDR}) essentially amount to a discrete noiseless channel.
}

\subsubsection{\label{sec:factor}Thermodynamic interpretation of factoring the joint distribution of a DMC}

The relationship between mutual information and free energy can be brought into clearer focus through a result of \cite{Ford3}.
\footnote{
Regarding the attempted interpretation in \cite{Ford3}, framing the quantity $F + \Delta W$ as a Gibbs free energy would impose specific requirements on the form of $\Delta W$ \emph{vis-\`a-vis} conjugate variables that apparently need not be met in general. However, there is nevertheless a useful analogy, cf. \cite{QianR} and the discussion at the end of \S \ref{sec:remarks}.
} 
Writing $V_{jk} := \mathbb{P}(x_j, y_k) = p_j W_{jk} \equiv \exp(-\beta E_{jk})/Z$, Alice and Bob's (marginal) energies are respectively $E^{(A)}_j = F - \frac{1}{\beta} \log \sum_k V_{jk}$ and $E^{(B)}_k = F - \frac{1}{\beta} \log \sum_j V_{jk}$, where as usual $Z = e^{-\beta F}$. 
\begin{theorem} \cite{Ford3}
Alice and Bob's mutual information is $I(X;Y) = -\beta (F + \Delta W)$
where $\Delta W = \sum_{j,k} V_{jk} (E_{jk} - [E^{(A)}_j + E^{(B)}_k])$ is the work associated with factoring the joint distribution of the DMC into its marginals.
\qed
\end{theorem}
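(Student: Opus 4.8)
The plan is to reduce everything to the Gibbs relation together with the elementary identity $I(X;Y) = D(V \,\|\, p \otimes q)$, where $q_k := \sum_j V_{jk} = \sum_j p_j W_{jk}$ is the output distribution. First I would invert the Gibbs relation $V_{jk} = e^{-\beta E_{jk}}/Z$ to obtain $\beta E_{jk} = -\log V_{jk} - \log Z = -\log V_{jk} + \beta F$, using $F = -\beta^{-1}\log Z$. Next, since $\sum_k V_{jk} = p_j$ (because $\sum_k W_{jk} = 1$) and $\sum_j V_{jk} = q_k$, the given expressions for the marginal energies become $\beta E^{(A)}_j = \beta F - \log p_j$ and $\beta E^{(B)}_k = \beta F - \log q_k$.

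Then I would substitute these three expressions into $\beta \Delta W = \sum_{j,k} V_{jk}\big(\beta E_{jk} - \beta E^{(A)}_j - \beta E^{(B)}_k\big)$. The three additive $\beta F$ terms appear with signs $+,-,-$ and collapse to a single $-\beta F$, while the logarithmic parts combine into $\log\!\big(p_j q_k / V_{jk}\big)$, so that $\beta \Delta W = \sum_{j,k} V_{jk}\log\frac{p_j q_k}{V_{jk}} - \beta F \sum_{j,k} V_{jk}$. Using $\sum_{j,k} V_{jk} = 1$ and recognizing the first sum as $-D(V\,\|\,p\otimes q) = -I(X;Y)$ gives $\beta\Delta W = -I(X;Y) - \beta F$, which rearranges to the claimed $I(X;Y) = -\beta(F + \Delta W)$.

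There is no genuine obstacle here beyond bookkeeping: the computation is linear in $\log V_{jk}$, and the only things to watch are that the three $\beta F$ constants collect with the correct signs and that $V$ is a bona fide joint distribution with total mass $1$. If $W$ has zero entries one adopts the usual convention $0\log 0 = 0$ (consistent with the nondegeneracy $p > 0$ assumed earlier), and the identity is unaffected. The one conceptual ingredient worth stating explicitly is that the mutual information equals the relative entropy between the joint law $V$ and the product of its marginals; this is precisely what converts the ``work of factoring'' $\Delta W$ into an information-theoretic quantity and closes the argument.
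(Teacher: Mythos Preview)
Your proof is correct: it is the straightforward bookkeeping computation that inverts the Gibbs relation for $V_{jk}$ and the two marginal energies, collects the three $\beta F$ terms, and recognizes the remaining sum as $-D(V\,\|\,p\otimes q)=-I(X;Y)$. The paper itself does not supply a proof of this statement at all; it simply quotes the result from \cite{Ford3} and closes with a \qed, so there is no alternative approach in the paper to compare against.
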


Here the correspondences $\delta Q \equiv \sum_j dp_j \cdot E_j$ and $\delta W \equiv \sum_j p_j \cdot d E_j$ (for which see, e.g., (1.17-18) of \cite{Hill}) provide the usual basis for interpreting the thermodynamical concepts of heat and work within the context of statistical physics.

\subsubsection{\label{sec:remarks}Remarks}

In general, the mutual information and corresponding free energy difference are plainly not identical by arguments above, though it is nevertheless commonplace to draw (perhaps overly) suggestive links between these quantities. Yet it is also natural to expect the extrema of these two quantities to be more closely related. We will illustrate this dichotomy below through the lens of DMCs. 

While here we treat DMCs as (quasi-) physical systems, one can also treat some physical systems as DMCs. This tactic has been used to analyze kinetics of signal transduction in biological cells by highlighting a correlation between channel capacity and (Gibbs) free energy expenditure, in particular a precise identification of their respective minima \cite{QianR}. 

\section{\label{sec:phenomenology}Channel capacity and free energy}

\subsection{\label{sec:simple}Binary input/output DMCs and zero-capacity DMCs}

\begin{theorem}
For a binary input/output DMC (BIODMC),
\begin{equation}
\label{eq:argminsequal}
\arg \min C = \arg \min F_{mix},
\end{equation}
where $F_{mix}$ is the value of the effective free energy $F$ corresponding to the capacity-achieving distribution and using $t_\infty = t_{mix}$.
\end{theorem}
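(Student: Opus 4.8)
The plan is to reduce everything to a one-parameter optimization problem and then exhibit an explicit correspondence between the critical equations for $C$ and for $F_{mix}$. First I would parametrize the BIODMC by its two crossover probabilities, writing $W = \begin{pmatrix} 1-a & a \\ b & 1-b \end{pmatrix}$ with $a,b \in [0,1]$, and similarly parametrize the input distribution by a single scalar $q := p_1 \in [0,1]$. The capacity-achieving distribution $p^* = p^*(a,b)$ is then obtained from the standard stationarity condition for $I(X;Y)$ over $q$; for a BIODMC this is a classical computation (the capacity-achieving $q$ is a known closed-form rational expression in $a,b$ via the ``$z$-channel'' style formulas), so $C = C(a,b)$ and $p^* = p^*(a,b)$ are both explicit. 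The point is that $\arg\min C$ is a set in the $(a,b)$ parameter space, and I claim it coincides with $\arg\min F_{mix}$ in that same space.

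Next I would write $F_{mix}$ explicitly as a function of $(a,b)$. By the Gibbs relation $p^*_k = Z^{-1} e^{-\beta E_k}$ and the definition $F = -\beta^{-1}\log Z$, one has $\beta F_{mix} = -\log Z = \log\big(\sum_k e^{-\beta E_k}\big)$ evaluated at the capacity-achieving distribution. Using $\gamma_k = \beta E_k = \frac1n\sum_j \log p^*_j - \log p^*_k$ from the Theorem in \S\ref{sec:efft}, and $\beta = \beta(t)$ given by \eqref{eq:temperature} with $t_\infty = t_{mix}(W)$, this makes $F_{mix}$ a fully explicit (if unwieldy) function of $(a,b)$ through $p^*(a,b)$ and $t_{mix}(a,b)$. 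The key structural observation I would exploit is that $\beta F_{mix}$ is, up to the temperature prefactor $\beta^{-1}$ and the mixing-time/norm factors, essentially $-\log Z$ composed with the Blahut-Arimoto map, which by the discussion in \S\ref{sec:srefed} tracks $C$ closely; I would try to show the two functions on $[0,1]^2$ are comonotone along every coordinate path, or at least share the same stationary locus.

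The cleanest route, I think, is to show that both $C$ and $F_{mix}$ are functions of $(a,b)$ that are invariant under, and vanish-to-extremal-order on, the same symmetry/degeneracy set — namely the locus where the channel becomes useless ($a + b = 1$, the zero-capacity BIODMCs, which the section title flags as the companion case) and, dually, are maximal on the noiseless locus. Concretely I would: (i) verify $C = 0$ exactly on $\{a+b=1\}$ and $C>0$ off it, so $\arg\min C = \{a+b=1\}$; (ii) show that on $\{a+b=1\}$ the capacity-achieving distribution degenerates in a way that forces $p^*$ to a configuration where $\gamma$, hence $E$, hence $-\log Z$, is extremized, and that the mixing-time factor $t_{mix}\|p^*\|_2\sqrt{\|\gamma\|_2^2+1}$ does not destroy this extremum; (iii) check the reverse inclusion by a second-order (Hessian) computation off the locus. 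An honest alternative, if the global statement resists, is to prove it for the symmetric sub-family $b = a$ by a direct one-variable calculus argument and then argue by continuity/genericity — but the paper states the result for all BIODMCs, so I would aim for the full $(a,b)$ claim.

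The main obstacle I anticipate is controlling the mixing-time factor $t_{mix}(W)$: unlike $-\log Z$, which is a smooth explicit function of $p^*$, the $L^2$ mixing time is defined via a spectral/variational quantity of $W$ (see \S\ref{sec:mixingTime}) and need not vary smoothly, and crucially its $\arg\min$ over $(a,b)$ need not a priori coincide with that of the entropic part of $F_{mix}$. So the heart of the proof is a lemma showing that near $\arg\min C$ the variation of $t_{mix}$ is subdominant to — or at worst co-extremal with — the variation of the $-\log Z$ term, so that the product's critical set is pinned by the information-theoretic factor. Establishing that domination estimate, ideally via an explicit bound on $\partial_a \log t_{mix}$ and $\partial_b \log t_{mix}$ against $\partial_a (-\log Z)$ and $\partial_b(-\log Z)$ in a neighborhood of the zero-capacity locus, is the step I expect to require the real work; everything else is bookkeeping with known BIODMC formulas.
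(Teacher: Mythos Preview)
Your overall strategy matches the paper's: parametrize by the two off-diagonal entries, identify $\arg\min C$ as the zero-capacity locus $\{a+b=1\}$ (the paper writes this as $W_{21}=1-W_{12}$), and then show $F_{mix}$ is also minimized exactly there. The part about $C$ is the same in both.

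The substantive difference is in how you handle $t_{mix}$. You frame the crux as a \emph{domination estimate}: bound $\partial_{a,b}\log t_{mix}$ against $\partial_{a,b}(-\log Z)$ near the locus so the product's critical set is ``pinned by the information-theoretic factor.'' The paper instead proves the cleaner statement that $t_{mix}$ itself has its extremum \emph{exactly} on the diagonal---i.e., your ``co-extremal'' alternative holds on the nose, not merely approximately. With that in hand, the paper then computes directly that $\beta_{mix}\cdot D_{W_\bullet}F_{mix}=0$ on the diagonal, which is simpler and more robust than trying to balance two competing log-derivatives. Your domination route could in principle succeed, but it is more work and risks failing if the two factors happen to be of comparable order; the paper's exact-extremum calculation for $t_{mix}$ sidesteps that entirely.

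One point you underweight: on $\{a+b=1\}$ the capacity-achieving distribution is not unique (every input achieves $I=0$), so $p^*$, $\gamma$, $Z$, and hence $F_{mix}$ are only defined there as limits from the interior, and differentiability across the locus is not automatic. The paper explicitly flags that ``carefully establishing differentiability and the limiting value itself on the diagonal'' is part of the work; your step (ii), which assumes $p^*$ ``degenerates'' to a specific configuration, needs this limiting analysis made precise before the $\gamma$/$E$/$Z$ chain goes through. Finally, the paper closes by checking the boundary and invoking a symmetry argument to certify the extremum is a minimum, rather than the Hessian computation you propose; either should work, but the symmetry route is lighter.
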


\begin{proof}[Proof (outline)] Both minima occur on the diagonal $W_{21} = 1-W_{12}$. A proof of this for $C$ amounts to showing that the capacity is zero only on the diagonal, and thus minimized precisely there. (This seems intuitively obvious, but a remarkably elaborate series of elementary calculations appears necessary to actually prove it, starting with deriving the capacity-maximizing distribution and then carefully establishing differentiability and the limiting value itself on the diagonal.) For $F_{mix} := -\beta_{mix}^{-1} \log Z$, a calculation shows that the $L^2$ mixing time $t_{mix}$ has its extremum on the diagonal, and another tedious calculation shows that $\beta_{mix} \cdot D_{W_\bullet} F_{mix}$ is zero there, which in turn yields that the derivative of $F_{mix}$ is zero. Mechanically checking the boundary for extrema and invoking a symmetry argument to establish that the extremum of $F_{mix}$ is indeed a minimum completes the proof.
\footnote{A detailed and brutally straightforward proof is at \url{https://bit.ly/33Wv1jc}.}
\end{proof}

Figure \ref{fig:BIODMC} illustrates the theorem and shows that \eqref{eq:argminsequal} continues to be true for a BIODMC if we hold any of the entries of the channel matrix fixed. In \S \ref{sec:nonsingulardmcs}, we consider the situation of a constrained channel matrix in greater generality. While for a BIODMC, $C$ and $F_{mix}$ have very similar behavior, we shall see that for generic DMCs the relationship is more complex than \eqref{eq:argminsequal}. Nevertheless, the result of minimizing $F_{mix}$ continues to be related to that of minimizing $C$. 

\begin{figure}[htbp]
\centering
\includegraphics[trim = 60mm 125mm 60mm 115mm, clip, width=\columnwidth,keepaspectratio]{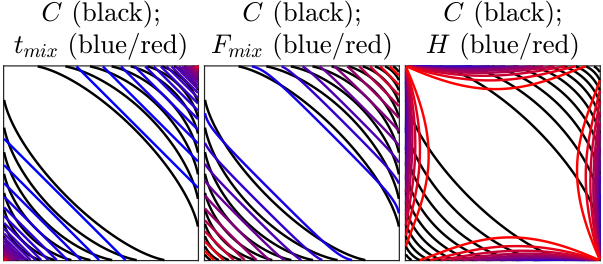}
\caption{ \label{fig:BIODMC} (Left) Contours of $C$ (black) and $t_{mix}$ (values increasing from {\color{blue}blue} to {\color{red}red}) of a BIODMC over the unit square with coordinates $(W_{12},W_{21})$. (Center; right) As in left panel, but for $F_{mix}$ and $H$, respectively.}
\end{figure} 

\begin{lemma}
For a DMC with square channel matrix, $C = 0 \iff t_{mix} = 1$.
\end{lemma}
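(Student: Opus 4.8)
The plan is to factor the asserted equivalence through the structural condition that every row of $W$ is the same probability vector, i.e.\ $W = \mathbf{1}\pi^\top$ for some distribution $\pi$, with $\mathbf{1}$ the all-ones column. Concretely, I would prove $C = 0 \iff W = \mathbf{1}\pi^\top \iff t_{mix} = 1$ and chain the two equivalences.

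For the first equivalence I would argue purely information-theoretically. If $W = \mathbf{1}\pi^\top$ then $\mathbb{P}(y_k \mid x_j) = \pi_k$ does not depend on $j$, so $Y$ is independent of $X$ and $I(X;Y) = 0$ for every input distribution $p$; hence $C = \sup_p I(X;Y) = 0$. Conversely, if $C = 0$ then in particular $I(X;Y) = 0$ for the uniform input $p_j = 1/n > 0$. Since mutual information vanishes exactly when $X$ and $Y$ are independent, and this $p$ has full support, we get $\mathbb{P}(y_k \mid x_j) = \mathbb{P}(y_k)$ for all $j,k$, i.e.\ every row of $W$ equals the common output distribution. This is the standard ``zero-capacity iff useless channel'' fact; the only care needed is to test against an input of full support so that independence of the pair forces equality of every conditional row.

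For the second equivalence I would unwind the definition of the $L^2$ mixing time of \S\ref{sec:mixingTime}. If $W = \mathbf{1}\pi^\top$ then $W$ is idempotent, since $W^2 = \mathbf{1}(\pi^\top\mathbf{1})\pi^\top = W$, so $\pi$ is the unique stationary distribution and $\delta_x W = \pi$ exactly after one step from every initial state $x$. Thus the chain sits at stationarity (a fortiori within any $L^2$ tolerance) at time $1$, while at time $0$ it does not, since no point mass equals $\pi$ when $n \ge 2$; hence $t_{mix} = 1$. Conversely, $t_{mix} = 1$ says the worst-case $L^2$ distance to stationarity has already collapsed after a single step, i.e.\ $\delta_x W = \pi$ for every $x$, and reading this across all rows gives $W = \mathbf{1}\pi^\top$. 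Equivalently, one can phrase this spectrally: $W = \mathbf{1}\pi^\top$ is precisely the rank-one case in which the subdominant spectrum of $W$ is $0$, which is when $t_{mix}$ attains its least possible value.

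The information-theoretic half is routine; the main obstacle is the mixing-time direction. It requires pinning down the exact normalization in the definition of the $L^2$ mixing time, so that ``mixes in one step'' reads off as the numerical value $1$ and so that $t_{mix} \ge 1$ always with $1$ unattainable by any $W$ having a nonzero subdominant eigenvalue. One must also dispatch the degenerate case in which the common row $\pi$ is not strictly positive: then $W$ is reducible, some states are transient, and the mixing time (and the $L^2(\pi)$ norm) should be understood relative to the recurrent class, after which the same argument applies verbatim.
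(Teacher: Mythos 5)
Your overall route is the same as the paper's: both factor the equivalence through the rank-one characterization $C = 0 \iff W = \mathbf{1}\pi^\top$, and your information-theoretic half matches the paper's argument essentially verbatim. The gap sits in the half you yourself flag as the main obstacle. The paper's $t_{mix}$ is not a threshold-type mixing time (``first time the worst-case distance to stationarity is small''); it is the relaxation time $t_{mix} := \lambda_*^{-1}$, where $\lambda_*$ is the bottom of the nonzero spectrum of the symmetrized operator $U$ built from the multiplicative reversibilization $P^\dagger P$ (see \S\ref{sec:mixingTime}). So ``$t_{mix}=1$'' does not read off as ``$\delta_x W = \pi$ for every $x$ after one step''; that implication is exactly what must be proved, and your sketch assumes it. The correct unwinding is via the variational characterization $\lambda_* = 1 - \sup_f \mathrm{Var}_p(Wf)/\mathrm{Var}_p(f)$: this gives $\lambda_* \le 1$ (hence $t_{mix} \ge 1$) in general, and $\lambda_* = 1$ forces $\mathrm{Var}_p(Wf) = 0$ for every $f$, i.e.\ $Wf$ constant for every $f$, i.e.\ $W = \mathbf{1}\pi^\top$. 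For the forward direction the paper computes $\mathrm{spec}(U)$ explicitly when $W = \mathbf{1}\pi^\top$: one finds $U = I - \sqrt{q}^T\sqrt{q}$ and, via the matrix determinant lemma, eigenvalues $0$ and $1$ only, so $\lambda_* = 1$.

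A second, smaller problem: your fallback spectral phrasing --- that $W = \mathbf{1}\pi^\top$ is ``precisely the rank-one case in which the subdominant spectrum of $W$ is $0$'' --- is false as a characterization. A stochastic matrix can have spectrum $\{1,0,\dots,0\}$ without being rank one (take a rank-two matrix whose restriction to the complement of $\mathbf{1}$ is nilpotent); such a channel has distinct rows, hence $C > 0$, yet all its subdominant eigenvalues vanish. This is precisely why the paper's $\lambda_*$ is built from $P^\dagger P$ (a singular-value-type quantity) rather than from the eigenvalues of $W$ itself. Your instinct to worry about the normalization and about the degenerate case where $\pi$ is not strictly positive is sound, but the repair is to work with the Dirichlet-form definition throughout rather than with the distance-to-stationarity picture.
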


\begin{proof}
The mutual information of two random variables is zero iff they are independent. It follows that a DMC with a square channel matrix $W$ has zero capacity iff $V_{jk} \equiv p_j q_k$, i.e. $W_{jk} \equiv q_k$, or in vector notation $W = 1q$, where $q$ is the row vector with $k$th entry $q_k$. A line of algebra shows that $qW = q$. Using the notation and results of \S \ref{sec:mixingTime}, a few short and elementary calculations suffice to show in turn that $W^\dagger = W$, $W^\dagger W = W$, $S = q^T q - \text{diag}(q)$, $T = S$, and $U = I - \sqrt{q}^T \sqrt{q}$, where the square root is componentwise. 

Now \eqref{eq:lambdastar4} gives that $t_{mix}^{-1} \equiv \lambda_* = \inf (\mbox{spec}(U) \backslash \{0\})$. In order to compute this, we note that
\begin{eqnarray}
\det(U - \lambda I) & = & \det((1-\lambda)I - \sqrt{q}^T \sqrt{q}) \nonumber \\
& = & (1 - \sqrt{q}(1-\lambda)^{-1}I \sqrt{q}^T) \cdot \det((1-\lambda)I) \nonumber \\
& = & (1 - (1-\lambda)^{-1})(1-\lambda)^n,
\end{eqnarray}
where the second equality follows from the matrix determinant lemma $\det(A + uv^T) = (1 + v^TA^{-1}u) \cdot \det(A)$ and the third equality follows from $\sum_\ell q_\ell = 1$. The determinant has a single root at $\lambda = 0$ and an $n$-fold root at $\lambda = 1$, so $\lambda_* = 1$ and we can conclude that for a DMC with a square channel matrix, $C = 0 \Rightarrow t_{mix} = 1$.

Meanwhile, we have in general that $t_{mix} \ge 1$ for a discrete-time Markov-like system such as a DMC, since $t_{mix} \equiv \lambda_*^{-1}$ and $\lambda_* = 1 - \sup_f \mbox{Var}_p(Wf)/\mbox{Var}_p(f) \le 1$ (cf. \eqref{eq:lambdastar3}). Suppose that in fact $t_{mix} = 1$, or equivalently that $\mbox{Var}_p(Wf) = 0$ for all $f$ and $p$. Then it must be the case that $Wf$ is always a constant (that depends on $f$), and this in turn requires $W \equiv 1q$. This yields the result.
\end{proof}


Combining the lemma with a continuity argument gives a reasonably general qualitative explanation of why the minima of $t_{mix}$ and $C$ are correlated. 


\subsection{\label{sec:nonsingulardmcs}Nonsingular constrained DMCs}

In this section, through a combination of numerical examples and partial analytical results, we illustrate the basic phenomenology relating capacity and free energy minimization for \emph{constrained} DMCs where only some of the channel matrix entries are allowed to vary. Specifically, we have the following general principle: if $F \approx \min F$, with degeneracies (such as in Figure \ref{fig:pardmc2}) removed by requiring $t_\infty \approx \min t_\infty$ (as in the lemma above), then $C \approx \min C$. The explanation hinges on the principle, presented in \S \ref{sec:basinsfollowcorners}, that ``basins of $F$ follow corners of $C$''. That is, as $W$ is subject to variable constraints, the corresponding minima of $F$ coalesce near the comparatively unconstrained minimum of $C$. 

While in the analysis below we leverage the so-called Muroga formula, there is no more general analytical expression known for the channel capacity of a DMC. In cases where the Muroga formula does not apply, we must resort to numerics in the form of the iterative Blahut-Arimoto algorithm \cite{Blahut,Arimoto}. Moreover, even when the Muroga formula holds, a complete theoretical treatment is hindered by the intrinsic analytical complexity of the necessary calculations. 

For instance, the Muroga formula itself involves a matrix inverse, and the calculation of $t_{mix}(W)$---a prerequisite for calculating $\beta_{mix}$, $F_{mix}$, etc.---is essentially spectral in nature. Attempts to produce analytical approximations near $W = 11^T/n$ or $I$ are both stymied: in the first case, because $W$ is (close to being) noninvertible, and in the second case because it turns out to be necessary to go to second order in perturbation theory. 
\footnote{
If $W = I + \epsilon Q$, then in the notation and context of \S \ref{sec:basinsfollowcorners}, $\sum_k M_{jk} H_k = H_j + O(\epsilon^2)$: both sides equal $-\epsilon[Q_{jj}(1-\log \epsilon) + \sum_{\ell \ne j} Q_{j\ell} \log Q_{j\ell}]$ to first order in $\epsilon$.
}

With this in mind, our analysis of generic nonsingular constrained DMCs is necessarily incomplete, though we take pains to reach a local maximum of the ratio between results and effort here. In particular, we provide illustrative and robust numerical examples 
as well as qualitative analytical results below that illustrate the relationship between channel capacity and free energy.


In Figure \ref{fig:pardmc} we consider the partially constrained $3 \times 3$ channel matrix
\begin{equation}
\label{eq:exampleA}
W(u,v) := \left ( \begin{smallmatrix} 
W_{11} & (1-W_{11}) \cdot u & (1-W_{11}) \cdot (1-u) \\
(1-W_{22}) \cdot (1-v) & W_{22} & (1-W_{22}) \cdot v \\
W_{31} & W_{32} & W_{33}
\end{smallmatrix} \right ).
\end{equation}
Consequently, we get, e.g., $C(u,v) := C(W(u,v))$, etc. Note that this particular model has fixed probabilities for successful transmission of symbols.

Figure \ref{fig:pardmc} depicts the specific choice $W_{11} = .35$, $W_{22} = .05$, $W_{31} = .15$, $W_{32} = .2$, and $W_{33} = .65$. This example is in many ways generic. The most salient generic feature is that the basins of $F$ lie along ``corners'' of $C$, and that the basin width and corner sharpness are correlated. Together with similar behavior of $-1/\beta_{mix}$ in relation to $C$ (not shown), this gives a qualitative  explanation of why $\arg \min F_{mix}$ should be strongly correlated with $\arg \min C$: the basins of $F_{mix}$ and corners of $C$ both merge near each other and the respective minima.

\begin{figure}[htbp]
\centering
\includegraphics[trim = 60mm 125mm 60mm 115mm, clip, width=\columnwidth,keepaspectratio]{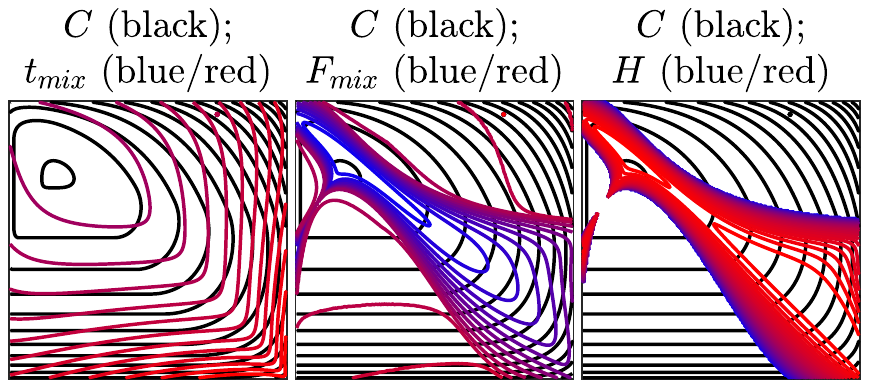}
\includegraphics[trim = 60mm 125mm 60mm 115mm, clip, width=\columnwidth,keepaspectratio]{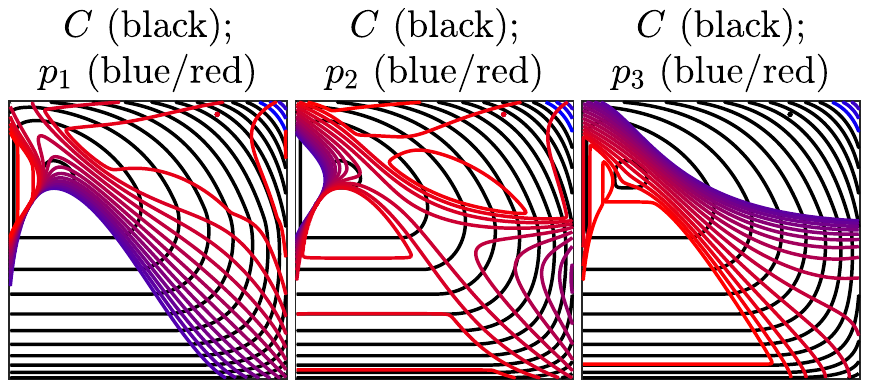}
\caption{ \label{fig:pardmc} (Top left) Contours of $C$ (black) and $t_{mix}$ (values increasing from {\color{blue}blue} to {\color{red}red}) for a DMC described in \eqref{eq:exampleA}; the axes are $(u,v)$ over the unit square. (Top center; right) As in the leftmost panel, but for $F_{mix}$ and $H$, respectively. (Bottom) As above, but for elements of the capacity-achieving distribution. Note that $p_1$, $p_2$, and $p_3$ are respectively zero in the lower left, upper left, and upper right regions. We have (see \S \ref{sec:basinsfollowcorners}) that $\partial C/\partial u = (\partial C/\partial W_{12})(\partial W_{12}/\partial u) + (\partial C/\partial W_{13})(\partial W_{13}/\partial u) = (1-W_{11}) p_1 (\psi^{(12)} - \psi^{(13)}) = 0$ where $p_1 = 0$, and similarly $\partial C/\partial v = (1-W_{22}) p_2 (-\psi^{(21)} + \psi^{(23)}) = 0$ where $p_2 = 0$. Together with the roughly capacity-emulating behavior of $t_{mix}$, this helps to explain why minima of capacity and free energy are correlated.}
\end{figure}

\subsection{\label{sec:basinsfollowcorners}Basins of $F$ follow corners of $C$}

We first recall the classical Muroga formula for the channel capacity of a DMC \cite{Muroga,Ash}. Let $W$ be an invertible channel matrix with $M := W^{-1}$ and define $H_j := -\sum_k W_{jk} \log W_{jk}$. Furthermore, define $d_j := \sum_i M_{ij} \exp \left ( - \sum_k M_{ik} H_k \right )$. The Muroga formula states that if $d > 0$, then $e^C = \sum_j \exp \left ( - \sum_k M_{jk} H_k \right )$, and $p := e^{-C} d$ is a capacity-achieving distribution. \footnote{As an aside, we note that the Muroga formula gives yet another indication of the similarity between the channel capacity and a free energy.}

\begin{theorem}
If the Muroga formula applies, then $\partial C/\partial W_{jk} = \psi^{(jk)} p_j$, where \begin{equation}
\psi^{(jk)} := \sum_m \left ( M_{km} - M_{jm} \right ) H_m + \log \frac{W_{jk}}{W_{jj}}
\end{equation} 
and $W_{jj} = 1 - \sum_{k \ne j} W_{jk}$ are dependent variables.
\end{theorem}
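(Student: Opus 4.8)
The plan is to differentiate the Muroga expression $e^C = \sum_i \exp(-\sum_m M_{im}H_m)$ directly, tracking the two places where $W_{jk}$ enters: the matrix inverse $M = W^{-1}$ and the row entropies $H_m$. Set $a_i := \exp(-\sum_m M_{im}H_m)$, so that $e^C = \sum_i a_i$; recalling $d_j = \sum_i M_{ij}a_i$ and $p_j = e^{-C}d_j$, we have $p_j = e^{-C}\sum_i M_{ij}a_i$. Since the Muroga formula is valid on the open set $\{d > 0\}$, $C$ is differentiable there and term-by-term differentiation is legitimate. Applying $\partial/\partial W_{jk}$ to $e^C = \sum_i a_i$ gives $e^C\,\partial C/\partial W_{jk} = -\sum_i a_i\,\partial/\partial W_{jk}\!\left(\sum_m M_{im}H_m\right)$, so the whole computation reduces to evaluating the inner derivative.

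The key subtlety is that $W_{jk}$ is \emph{not} an independent coordinate: the constraint $W_{jj} = 1 - \sum_{k\ne j}W_{jk}$ means varying $W_{jk}$ ($k \ne j$) simultaneously varies $W_{jj}$, so the total differential of the matrix is $\partial W/\partial W_{jk} = E_{jk} - E_{jj}$ in terms of elementary matrices. First I would handle the inverse via $\partial M/\partial W_{jk} = -M\,(E_{jk} - E_{jj})\,M$, which componentwise gives $\partial M_{im}/\partial W_{jk} = -M_{ij}M_{km} + M_{ij}M_{jm} = M_{ij}(M_{jm} - M_{km})$. Next the entropy term: $H_m$ involves $W_{jk}$ only through row $j$, and using $\partial W_{jj}/\partial W_{jk} = -1$ one finds $\partial H_j/\partial W_{jk} = -(\log W_{jk}+1) + (\log W_{jj}+1) = \log(W_{jj}/W_{jk})$, with $\partial H_m/\partial W_{jk} = 0$ for $m \ne j$. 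Assembling the two contributions, $\partial/\partial W_{jk}\!\left(\sum_m M_{im}H_m\right) = M_{ij}\sum_m(M_{jm}-M_{km})H_m + M_{ij}\log(W_{jj}/W_{jk}) = -M_{ij}\psi^{(jk)}$, where the last equality is just the definition of $\psi^{(jk)}$ with the signs and the $\log(W_{jk}/W_{jj})$ factor flipped out.

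Substituting back, $e^C\,\partial C/\partial W_{jk} = \sum_i a_i M_{ij}\psi^{(jk)} = \psi^{(jk)} d_j$, and dividing by $e^C$ and using $p_j = e^{-C}d_j$ yields $\partial C/\partial W_{jk} = \psi^{(jk)} p_j$, as claimed. As a consistency check one recovers the chain-rule identities quoted in the caption of Figure \ref{fig:pardmc}, e.g. $\partial C/\partial u = (1-W_{11})p_1(\psi^{(12)} - \psi^{(13)})$.

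I expect the main obstacle to be organizational rather than conceptual: one must keep the dependent variable $W_{jj}$ correctly accounted for in \emph{both} differentiations — the matrix-inverse step (the $E_{jj}$ summand in $\partial W/\partial W_{jk}$) and the entropy step (the $\partial W_{jj}/\partial W_{jk} = -1$ factor) — and every sign must line up, since a stray sign in either place corrupts the final formula. A secondary point worth stating explicitly is the domain issue: $d > 0$ is what simultaneously guarantees that $p$ is a genuine probability distribution and that the Muroga expression for $e^C$ holds (hence is differentiable) on a neighborhood, so the formal manipulation above genuinely computes $\nabla C$.
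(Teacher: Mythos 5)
Your proposal is correct and follows essentially the same route as the paper: differentiate the Muroga expression for $e^C$, use $D(W^{-1}) = -W^{-1}(DW)W^{-1}$ together with the constraint-aware perturbation $\partial W/\partial W_{jk} = E_{jk}-E_{jj}$ (the paper writes this as $\partial W_{ab}/\partial W_{jk} = \delta_{ja}(\delta_{kb}-\delta_{ab})$), compute $\partial H_m/\partial W_{jk} = \delta_{jm}\log(W_{jj}/W_{jk})$, and collect terms to get $\partial(e^C)/\partial W_{jk} = \psi^{(jk)}d_j$ before dividing by $e^C$. The only additions beyond the paper's argument are your explicit remarks on differentiability over the open set $\{d>0\}$ and the consistency check against the chain-rule identities, both of which are fine.
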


\begin{proof}
By the Muroga formula, we have for $j \ne k$ that
$$\frac{\partial}{\partial W_{jk}} e^C = \sum_\ell \left [ \exp \left ( - \sum_{m'} M_{\ell m'} H_{m'} \right ) \cdot \sum_m \phi_{\ell m}^{(jk)} \right ],$$
where
$$\phi_{\ell m}^{(jk)} := -\frac{\partial M_{\ell m}}{\partial W_{jk}} \cdot H_m - M_{\ell m} \cdot \frac{\partial H_m}{\partial W_{jk}}.$$.

For $A$ invertible, we have $0 = D_x \left ( AA^{-1} \right ) = D_x (A) \cdot A^{-1} + A \cdot D_x (A^{-1})$, so $D_x (A^{-1}) = -A^{-1} (D_x A) A^{-1}$. This yields that $\frac{\partial M_{\ell m}}{\partial W_{jk}} = -\sum_{a,b} M_{\ell a} \frac{\partial W_{ab}}{\partial W_{jk}} M_{b m}$. Meanwhile, a line of algebra shows that if $j \ne k$, then $\frac{\partial W_{ab}}{\partial W_{jk}} = \delta_{ja}(\delta_{kb} - \delta_{ab})$. This yields in turn that $\frac{\partial M_{\ell m}}{\partial W_{jk}} = -M_{\ell j} \left ( M_{km} - M_{jm} \right )$. 

Straightforward calculations also yield 
$$\frac{\partial}{\partial W_{jk}} \left ( W_{ma} \log W_{ma} \right ) = \delta_{jm}(\delta_{ka} - \delta_{ma}) \cdot (\log W_{ma} + 1)$$ and $\frac{\partial H_m}{\partial W_{jk}} = -\delta_{jm} \log \frac{W_{mk}}{W_{mm}}$. Collecting results, we obtain 
$$\phi_{\ell m}^{(jk)} = M_{\ell j} \left [ \left ( M_{km} - M_{jm} \right ) H_m + \delta_{jm} \log \frac{W_{mk}}{W_{mm}} \right ]$$
and $\sum_m \phi_{\ell m}^{(jk)} = M_{\ell j} \psi^{(jk)}$, so $\frac{\partial}{\partial W_{jk}} e^C = \psi^{(jk)} d_j$. Since $e^{-C} d = p$, the theorem follows.
\end{proof}

The key consequence of this result is that apart from an extremum, many partial derivatives of $C$ also vanish where components of the capacity-achieving distribution become zero. The Karush-Kuhn-Tucker conditions associated with the Muroga formula imply that if $p_j = 0$, then $\partial C/\partial W_{jk} = 0$ for all $k$. This means that (the graph and level surfaces of) $C$ will have (straight edges adjacent to) corners whose sharpness is dictated by the separation between regions where $p_j = 0$ and $p_k = 0$ for some $j \ne k$. 
\footnote{
As \cite{Ash} points out, if $W$ is invertible but $d \not > 0$, then the KKT conditions imply that a capacity-achieving distribution must have at least one zero component. A detailed analysis would begin by eliminating the appropriate rows of $W$ before dealing with the resulting nonsquare channel matrix in the manner of \cite{Cheng,Takano}. However, for our purposes it is sufficient to note that $\partial C/\partial W_{jk} = \psi^{(jk)} p_j$ still holds when $p \not > 0$. 
}
Meanwhile, as $p_j \downarrow 0$, we have that $\beta^{-1} \downarrow 0$ and $F \uparrow 0$. In other words, $F$ plateaus at its maximum possible value of $0$ wherever $p \not > 0$, and necessarily has basins between regions where $p_j = 0$ and $p_k = 0$ for some $j \ne k$. That is, \emph{basins of $F$ follow corners of $C$}.

This in turn explains why minima of $F$ occur near minima of $C$. $F$ is near a more constrained minimum at a corner of $C$, which is also where a more constrained minimum of $C$ occurs. As we let more entries of $W$ vary, the respective basins of $F$ coalesce near the progressively less constrained minima of $C$.

An analogous argument does not hold for entropy, as Figure \ref{fig:pardmc} makes clear by way of example. The distinguishing cause is the role played by $t_\infty$. The already-discussed relationship between the minimum of $t_\infty$ and $C$ acts in concert with the basins/corners principle (which on its own still applies to the entropy) to localize the minimum of $F$ near that of $C$.

\begin{figure}[htbp]
\centering
\includegraphics[trim = 60mm 125mm 60mm 115mm, clip, width=\columnwidth,keepaspectratio]{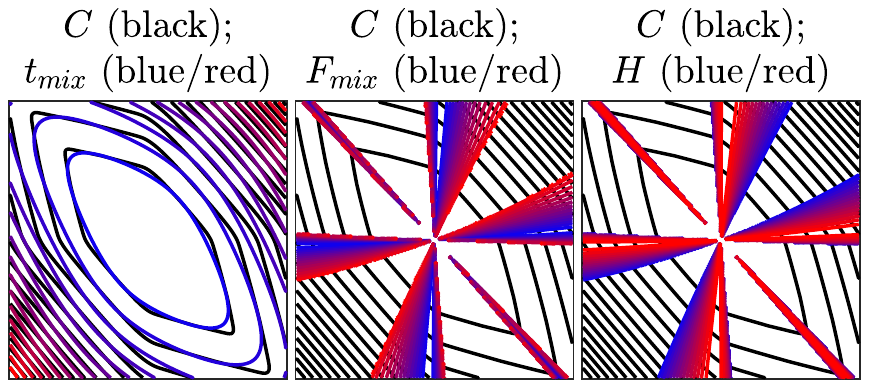}
\\
\includegraphics[trim = 60mm 125mm 60mm 115mm, clip, width=\columnwidth,keepaspectratio]{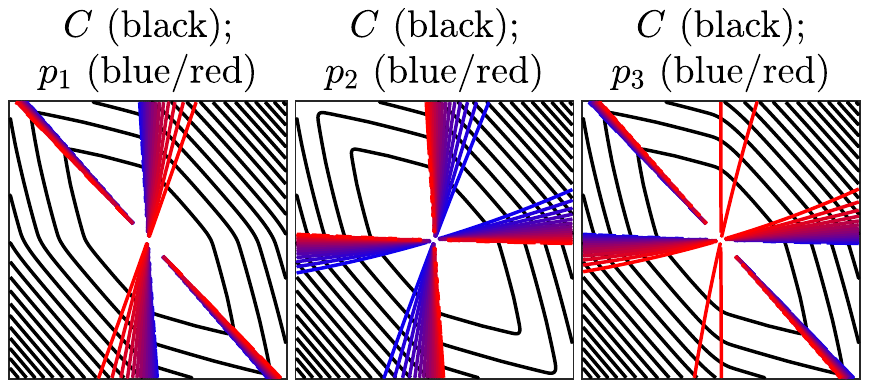}
\caption{ \label{fig:pardmc2} As in Figure \ref{fig:pardmc}, but for the example at the end of \S \ref{sec:basinsfollowcorners}.}
\end{figure}

Figure \ref{fig:pardmc2} provides a more striking example illustrating the general principle that basins of $F$ tend to follow corners of $C$ by considering convex combinations $W(u,v) = c^{(0)}\frac{11^T}{n} + c^{(u)}W^{(u)} + c^{(v)}W^{(v)} $ with $n = 3$, $c^{(0)} := 1- a(u+v-1)$, $c^{(u)} := a(u-\frac{1}{2})$, $c^{(v)} := a(v-\frac{1}{2})$, and with $a = 0.2$ and 
\begin{equation}
W^{(u)} = \left ( \begin{smallmatrix} 0.25 & 0.14 & 0.61 \\ 0.29 & 0.67 & 0.04 \\ 0.08 & 0.74 & 0.18 \end{smallmatrix} \right ); \quad W^{(v)} = \left ( \begin{smallmatrix} 0.31 & 0.04 & 0.65 \\ 0.50 & 0.10 & 0.40 \\ 0.01 & 0.58 & 0.41 \end{smallmatrix} \right ). \nonumber
\end{equation}
As mentioned above, this example also highlights the role of degeneracies: in order to have $C \approx \min C$, we must have $t_\infty \approx \min t_\infty$ as well as $F \approx \min F$ (note that these can be jointly satisfied).

\begin{theorem}
Near $\arg \min C$, we have $\psi^{(jk)} \approx 0$ for $j \ne k$; if also $\sum_\ell M_{j\ell}/p_\ell \approx \sum_\ell M_{k\ell}/p_\ell$, then $\partial \log Z/\partial W_{jk} \approx 0$. \end{theorem}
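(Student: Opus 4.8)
The plan is to collapse $\log Z$ to a closed form in the capacity-achieving distribution $p$ and then differentiate, reusing verbatim the intermediate identities from the proof of the previous theorem. First I would note that the normalization $E(u)=0$ built into the bijection of \cite{Huntsman3} forces $\sum_k\gamma_k=0$, since $\gamma_k=\frac1n\sum_j\log p_j-\log p_k$; summing the Gibbs relation $\log p_k=-\log Z-\gamma_k$ over $k$ then gives $\log Z=-\frac1n\sum_\ell\log p_\ell$, which depends on the channel only through $p$. Hence
$$\frac{\partial\log Z}{\partial W_{jk}}=-\frac1n\sum_\ell\frac1{p_\ell}\frac{\partial p_\ell}{\partial W_{jk}},$$
and writing $p_\ell=e^{-C}d_\ell$ and invoking $\partial C/\partial W_{jk}=\psi^{(jk)}p_j$ from the previous theorem turns this into
$$\frac{\partial\log Z}{\partial W_{jk}}=\psi^{(jk)}p_j-\frac{e^{-C}}{n}\sum_\ell\frac1{p_\ell}\frac{\partial d_\ell}{\partial W_{jk}}.$$

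Next I would compute $\partial d_\ell/\partial W_{jk}$ directly. Setting $g_i:=\exp(-\sum_m M_{im}H_m)$, so that $d_\ell=\sum_i M_{i\ell}g_i$ and $e^C=\sum_i g_i$, the identities already obtained in the previous proof, namely $\partial M_{i\ell}/\partial W_{jk}=-M_{ij}(M_{k\ell}-M_{j\ell})$ and $\sum_m\partial(M_{im}H_m)/\partial W_{jk}=-M_{ij}\psi^{(jk)}$, give
$$\frac{\partial d_\ell}{\partial W_{jk}}=-(M_{k\ell}-M_{j\ell})\,d_j+\psi^{(jk)}\sum_i M_{i\ell}M_{ij}g_i.$$
Substituting this back into the previous display and using $e^{-C}d_j=p_j$ yields, after cancellation,
$$\frac{\partial\log Z}{\partial W_{jk}}=\psi^{(jk)}p_j+\frac{p_j}{n}\Big(\sum_\ell\frac{M_{k\ell}}{p_\ell}-\sum_\ell\frac{M_{j\ell}}{p_\ell}\Big)-\frac{e^{-C}\psi^{(jk)}}{n}\sum_\ell\frac1{p_\ell}\sum_i M_{i\ell}M_{ij}g_i.$$

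Finally, near $\arg\min C$ the first-order conditions give $\partial C/\partial W_{jk}=\psi^{(jk)}p_j\approx0$ in every free direction, so wherever $p_j$ is bounded away from $0$ we get $\psi^{(jk)}\approx0$; this is the first assertion, and it annihilates the first and third terms of the last display. The sole surviving term is $\frac{p_j}{n}\big(\sum_\ell M_{k\ell}/p_\ell-\sum_\ell M_{j\ell}/p_\ell\big)$, which is exactly what the hypothesis $\sum_\ell M_{j\ell}/p_\ell\approx\sum_\ell M_{k\ell}/p_\ell$ sends to zero, giving $\partial\log Z/\partial W_{jk}\approx0$. The algebra is a routine continuation of the previous proof; the real obstacle is bookkeeping the approximations — verifying that every term carrying a factor $\psi^{(jk)}$ is of genuinely lower order uniformly on a neighborhood of $\arg\min C$, and handling separately the boundary case $p_j\approx0$, where $\psi^{(jk)}$ need not be small but $d_j=e^Cp_j$ is, so that one instead deletes the vanishing rows and passes to a reduced square channel in the manner of \cite{Cheng,Takano}.
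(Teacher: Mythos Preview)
Your proof is correct and follows essentially the same route as the paper's: both derive $\log Z = C - \tfrac1n\sum_\ell\log d_\ell$ (you via $\log Z=-\tfrac1n\sum_\ell\log p_\ell$ and $p_\ell=e^{-C}d_\ell$, which is equivalent), compute $\partial d_\ell/\partial W_{jk}$ using the same two identities from the previous theorem, and then observe that the $\psi^{(jk)}$-bearing terms die near $\arg\min C$ while the surviving term is exactly $\tfrac{p_j}{n}\big(\sum_\ell M_{k\ell}/p_\ell-\sum_\ell M_{j\ell}/p_\ell\big)$. Your added remarks on uniformity of the approximation and the $p_j\approx 0$ boundary case are sound and go a little beyond what the paper spells out.
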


\begin{proof} For $j \ne k$, we have $\frac{\partial d_\ell}{\partial W_{jk}} = (M_{j\ell}-M_{k\ell})d_j + \psi^{(jk)} \sum_i M_{ij} M_{i\ell} \exp \left ( -\sum_m M_{im} H_m \right )$. Note that the partition function of the capacity-achieving distribution $p = e^{-C} d$ satisfies $\log Z = -\frac{1}{n} \sum_j \log p_j = C - \frac{1}{n} \sum_j \log d_j$, while as a consequence of the previous result we have that $\sum_\ell \frac{\partial \log d_\ell}{\partial W_{jk}} = d_j \sum_\ell \frac{M_{j\ell}-M_{k\ell}}{d_\ell} + \psi^{(jk)} \sum_i M_{ij} \exp \left ( -\sum_m M_{im} H_m \right ) \cdot \sum_\ell \frac{M_{i\ell}}{d_\ell}$. 

By the previous theorem, we have that $\frac{\partial \log Z}{\partial W_{jk}} = \psi^{(jk)} p_j - \frac{1}{n} \sum_\ell \frac{\partial \log d_\ell}{\partial W_{jk}}$
and for $j \ne k$ furthermore that $\sum_\ell \frac{M_{i\ell}}{p_\ell} \overset{\forall i}{\approx} n \Rightarrow \frac{\partial \log Z}{\partial W_{jk}} \approx 0$. The result follows.
\end{proof}

The assumptions of the theorem are satisfied in the ``good channel'' regime $W = I + \varepsilon Q$ where $Q$ is such that its rows sum to zero, its diagonal is strictly negative, and $\| \varepsilon Q \| \ll1$.
\footnote{
Note that in this regime that $t_{mix}$ becomes very large.
}
The theorem allows us to understand when and why $-\log Z = \beta F$ is approximately constant ($>1$) near $\arg \min C$. Since here $\beta^{-1} \approx F$ up to a constant factor, this turns out to inform how our results align with those of \cite{Reiss,ReissHuang} for discrete noiseless channels.

\section{\label{sec:closing}Remarks}

An important question is whether or not an \emph{exact} equality of the form $\arg \min C = \arg \min F$ holds for non-binary channels. Treatments of the binary case in this paper and in \cite{QianR}, along with general arguments for DMCs, strongly suggest that this ought to be the case. The alternative would imply a counterintuitive residual capacity for communication in thermodynamic equilibrium states that would vanish in a nonequilibrium state. 

With this in mind, taking $\arg \min C = \arg \min F$ as an \emph{Ansatz} provides strong constraints on $t_\infty$ that complement purely physical constraints discussed in \cite{Huntsman3}. This is significant because finding the ``correct'' form for $t_\infty$ (versus merely working with $t_{mix}$ as a proxy) is the central theoretical obstacle to applying the framework of effective statistical physics in a precise way to generic finite systems, and might inform nonequilibrium statistical physics as well.

\appendices

\section{\label{sec:mixingTime}$L^2$ mixing time}

We follow \cite{LPW} without comment or elaboration here. If $P$ is a (row-) stochastic matrix and $p$ the corresponding invariant distribution, then the Dirichlet form is $\mathcal{E}(f) = \frac{1}{2} \sum_{j,k}p_j P_{jk} (f_j - f_k)^2$ and the time reversal or $L^2$ adjoint of $P$ is given by $P^\dagger_{jk} := p_k P_{kj}/p_j$. It is easy to show that $pP^\dagger P = p$, and in turn that the Dirichlet form corresponding to $P^\dagger P$ satisfies $-\mathcal{E}_{P^\dagger P} = \mbox{Var}_p(Pf) - \mbox{Var}_p(f)$, where the variance of $f$ is $\mbox{Var}_{p}(f) := \sum_j p_j f_j^2 - (\sum_j p_j f_j)^2$. Define
\begin{equation}
\label{eq:lambdastar3}
\lambda_* := \inf_{f} \frac{\mathcal{E}_{P^\dagger P}(f)}{\mbox{Var}_p(f)},
\end{equation} 
where the infimum is over $f$ s.t. $\mbox{Var}_p(f) \ne 0$. If now we additionally define $S := \text{diag}(p) (P^\dagger P - I)$, then $\mathcal{E}(f) = -f^T S f$. With $T := (S + S^T)/2 = S$ and $U := -\text{diag}(p)^{-1/2} T \text{diag}(p)^{-1/2}$, we have
\begin{equation}
\label{eq:lambdastar4}
\lambda_* = \inf \left ( \mbox{spec}(U) \backslash \{ 0 \} \right ).
\end{equation}
$\lambda_*$ determines the $L^2$ convergence of the Markov process to stationarity; $t_{mix} := \lambda_*^{-1}$ is the $L^2$ mixing time. 


\section*{Acknowledgment} I am grateful to H. J. Gonzalez for pointing out a characterization of zero-capacity DMCs. 

%

\ifCLASSOPTIONcaptionsoff
  \newpage
\fi



%

\end{document}